\def\Tvn#1{\hbox{\textit{#1\/}}}
\def\Tceil#1{\lceil #1\rceil}
\title{On-the-Fly Array Initialization in Less Space}
\author{Torben Hagerup\inst{1} \and Frank Kammer\inst{2}}
\institute{Institut f\"ur Informatik, Universit\"at Augsburg, 86135
Augsburg, Germany\\\email{hagerup@informatik.uni-augsburg.de}
\and MNI, Technische Hochschule Mittelhessen,
 35390 Gie\ss en, Germany\\\email{frank.kammer@mni.thm.de}}
\begin{document}
\overfullrule=5pt
\maketitle{}

\pagestyle{plain}
\thispagestyle{plain}

\begin{abstract}
We show that for all given $n,t,w\in\{1,2,\ldots\}$
with $n<2^w$,
an array of $n$ entries of $w$ bits each
can be represented on a word RAM with a
word length of $w$ bits in at most
$n w+\lceil{n({t/{(2 w)}})^t}\rceil$ bits of uninitialized
memory to support constant-time initialization
of the whole array
and $O(t)$-time reading and writing of
individual array entries.
At one end of this tradeoff, we achieve
initialization and
access (i.e., reading and writing)
in constant time
with $n w+\lceil{{n/{w^t}}}\rceil$ bits for arbitrary fixed~$t$,
to be compared with $n w+\Theta(n)$ bits for the
best previous solution,
and at the opposite end,
still with constant-time initialization,
we support $O(\log n)$-time
access with just $n w+1$ bits, 
which is optimal for arbitrary access times
if the initialization executes fewer than
$n$ steps.\\

{\bf Keywords.} Data structures, space efficiency,
constant-time initialization, on-the-fly initialization, arrays
\end{abstract}

\section{Introduction}
\label{sec:intro}
Whereas the space used by an algorithm
(measured in ``memory units'' such as
words) is usually
bounded by its running time, there may be exceptions
if the memory offers random access, and it is
occasionally useful to employ large arrays of which
only a small part will ever be accessed.
A case in point are adjacency matrices, which
are a convenient representation of graphs if the
algorithms to be executed issue
adjacency queries (e.g., ``does $G$ contain an edge
from $u$ to $v$?'')\
in an irregular pattern that cannot
be served efficiently using adjacency lists.
Even if one can afford the space needed by an
adjacency matrix, it may be prohibitively expensive
to clear all those entries in the matrix that do not
correspond to edges in the graph.
The problem does not occur if the memory cells allocated to
hold the adjacency matrix can be assumed
to be already initialized to some particular value
(that can be taken to signify ``no edge''),
but in general this is not a realistic assumption.
Therefore the problem of simulating an initialized
array in an uninitialized memory has been considered
since the early days of computing.

Additional motivation for our work comes from
the fact that certain modern programming languages
such as Java, VHDL and~D stipulate that memory be
initialized (e.g., cleared to zero)
before it is allocated to application programs~\cite{GosJSBB15,VHDL11}
or have this as the default
behavior~\cite{Ale10}.
The initialization is carried out
for security reasons and to ease debugging
by making faulty programs more
deterministic.
If it can be ensured that application programs
access memory only through a well-defined interface,
one may hope to let the interface provide conceptually cleared memory
while avoiding the overhead of clearing
the memory physically.

For some $w\in\mathbb{N}=\{1,2,\ldots\}$, our model
of computation is a word RAM~\cite{AngV79,Hag98} with a word length
of $w$ bits,
where we assume that $w$ is
large enough to allow all memory words in use to be addressed.
As part of ensuring this, in the context of
an array of size $n$ we always assume that
$n<2^w$.
The word RAM has constant-time operations
for addition, subtraction and multiplication
modulo $2^w$, division with truncation
($(x,y)\mapsto\lfloor{{x/y}}\rfloor$ for $y>0$),
left shift modulo $2^w$
($(x,y)\mapsto (x\ll y)\bmod 2^w$,
where $x\ll y=x\cdot 2^y$),
right shift
($(x,y)\mapsto x\gg y=\lfloor{{x/{2^y}}}\rfloor$),
and bitwise Boolean operations
($\textsc{and}$, $\textsc{or}$ and $\textsc{xor}$
(exclusive or)).
We also assume a constant-time operation to
load an integer that deviates from $\sqrt{w}$
by at most a constant factor---this enables the
proof of Lemma~\ref{lem:word}.
The problem of central concern to us is to
realize a clearable word array,
defined as follows:

\begin{definition}
A \emph{clearable word array} is a data structure
that can be initialized with an integer
$n\in\mathbb{N}$ and subsequently maintains an
element of $\{0,\ldots,2^w-1\}^n$, called its
\emph{client sequence} and initially
$(0,0,\ldots,0)$, under the following operations:

\begin{description}
\item[\normalfont$\Tvn{read}(\ell)$]
($\ell\in\{0,\ldots,n-1\}$):
If the client sequence before the call is
$(x_0,\ldots,x_{n-1})$, returns $x_\ell$
without changing the client sequence.
\item[\normalfont$\Tvn{write}(\ell,x)$]
($\ell\in\{0,\ldots,n-1\}$ and $x\in\{0,\ldots,2^w-1\}$):
If the client sequence before the call is
$(x_0,\ldots,x_{n-1})$, changes the client
sequence to be
$(x_0,\ldots,x_{\ell-1},x,x_{\ell+1},\ldots,x_{n-1})$.
\end{description}
\end{definition}

The clearable word array is a special case
of the \emph{initializable array} of
Navarro~\cite{Nav14}.
There are two differences.
First, the data structure of Navarro is more
general in that the initialization, in addition
to~$n$, receives a second parameter $v$ that
is taken to be the initial value of the
array entries, i.e., the initial value of the
client sequence is $(v,v,\ldots,v)$ rather
than $(0,0,\ldots,0)$.
As is easy to see and will be discussed in
Section~\ref{sec:contribution},
however, the more general data structure
reduces easily to the more restricted one.
Second, Navarro does not specify the nature of
the array entries, which is of no relevance
to his approach, whereas we fix the array entries to be
\emph{words}, i.e., elements of $\{0,\ldots,2^w-1\}$.
Again, this will turn out to be a
restriction of little consequence.

Following the initialization of a clearable
word array with an integer $n$, we call $n$
the \emph{universe size} of the data structure.
We shall have occasion to consider restricted
clearable word arrays that can be initialized only
for certain specific universe sizes.
Because the connection between the client sequence
of an initializable array and an array used to hold it
is often very close, it is easy to confuse the two.
We may view the client sequence as an array, but
then use the letter `$a$' to denote this abstract array
(which is initialized)
and `$A$' to denote the corresponding physical array
(which is not initialized).

\section{Previous Work}
\label{sec:related}
Fredriksson and Kilpel\"ainen~\cite{FreK16} give a
detailed overview of the known approaches to
array initialization
and compare them experimentally.
In the discussion of their work, we assume that
the task is to realize an initializable array
of $n$ entries of $b\le w$ bits each.
Define the \emph{redundancy} of a data structure
that solves this problem and occupies $N$ bits
to be $N-n b$, i.e., the number of bits
used beyond the minimum of $n b$ bits
needed even without the requirement
of initializability.

A number of the methods described by
Fredriksson and Kilpel\"ainen can be viewed as
special cases of a general \emph{trie method}.
Ignoring rounding issues, the trie method is
parameterized by an integer $h\in\mathbb{N}$
and a \emph{degree sequence} $(d_1,d_2,\ldots,d_h)$
of $h$ positive integers with $\prod_{i=1}^h d_i=n b$.
It uses a tree $T$ of height~$h$ in which all
nodes of height $i$ have $d_i$ children,
for $i=1,\ldots,h$.
Each node in $T$ has an associated bit,
the bits of each maximal group of siblings are stored
compactly, $w$ bits to a word, and
the $n b$ bits at the leaves are identified with
the $n b$ bits of the abstract array~$a$.

Let \emph{processing} an inner node $u$ in $T$ be the
following:
If the bit associated with $u$ has the value~0
(informally, $u$ has been initialized, but
its children have not),
initialize the bits of all children of~$u$,
to 0 if the children are inner nodes and
to the prescribed initial value $v$---within
groups of $b$ siblings in the obvious manner---if
they are leaves.
If $u$ has $d$ children, this can be done in
$O(\lceil{{d/w}}\rceil)$ time.
Finally set the bit associated with $u$ to~1.
If the value of that bit is 1 already prior to the
processing of~$u$, the processing of~$u$
terminates immediately after discovering this fact.

To initialize $T$, set the bit at its root to~0.
In addition, it is permissible, as
part of the initialization, to process the
inner nodes in an upper part of $T$ in a top-down fashion,
i.e., so that no nonroot node is processed
before its parent.
We will say that such nodes are
\emph{preprocessed}.
To read the $\ell$th entry of~$a$, descend in $T$
towards the $\ell$th group of $b$ leaves.
If an inner node is encountered whose
associated bit has the value~0,
return $v$.
If not, return the value found in the
$\ell$th group of $b$ leaves.
To write the $\ell$th entry of~$a$, 
descend in the same manner towards the
$\ell$th group of $b$ leaves, process every inner node
encountered on the way, and finally store the
appropriate value in the bits of the $\ell$th
group of $b$ leaves.
The total number of bits used by the data structure
is the number of nodes in $T$ that are not preprocessed,
the initialization takes constant time plus
time proportional to the sum of $\lceil{{d/w}}\rceil$
over all degrees $d$
of preprocessed nodes,
the worst-case time of \Tvn{read} is
$\Theta(h)$, and the worst-case time of \Tvn{write} is
the maximum over all leaves $v$ in $T$ of
$\Theta(h+\sum_i\lceil{{d_i}/w}\rceil)$,
where the sum ranges over those values of~$i\in\{1,\ldots,h\}$
for which the ancestor of $v$ of height~$i$
is not preprocessed.

Fredriksson and Kilpel\"ainen consider the following
special cases of the trie method:
Degree sequence $(n b)$, preprocess the root
(Plain);
degree sequence $(b,n)$, preprocess the root
(Simple);
degree sequence $(b,w,w,\ldots,w)$ (Hierarchic);
degree sequence $(b,{n/w},w)$ (Simple-H); and
degree sequence $(b,w,{n/w})$, preprocess
the root (SHV).
The redundancy is 0
for Plain
and close
to $n$ (i.e., the number of nodes
in $T$ of height~1) for the other methods.
The initialization time is $\Theta(1+{{n b}/w})$
for Plain, $\Theta(1+{n/w})$ for Simple,
$\Theta(1+{n/{w^2}})$ for SHV and
$\Theta(1)$ for the other methods.
The worst-case time for \Tvn{read} is $\Theta(1+\log_w n)$
for Hierarchic and $\Theta(1)$ for the other methods.
The worst-case time for \Tvn{write}, finally, is
$\Theta(1+\log_w n)$ for Hierarchic,
$\Theta(1+{n/{w^2}})$ for Simple-H and
$\Theta(1)$ for the other methods.

None of the methods discussed above combines
constant initialization time with constant access time,
and it is easy to see that this is true of every
instance of the trie method.
Constant time for every operation is achieved by
a folklore method
that goes back at least to the
early 1970s (see~\cite[Exercise~2.12]{AhoHU74}).
The folklore method uses a physical
array $A$ with the index set $\{0,\ldots,n-1\}$
and assigns the codes $0,1,\ldots$
to the indices of the abstract array $a$ in the order in which the
indices are first used in calls of \Tvn{write},
$x_\ell$ is stored in $A[f(\ell)]$, where $f(\ell)$ is
the code of $\ell$, two tables are used
to keep track of the encoding function $f$
and its inverse $f^{-1}$,
and finally the data structure remembers the
number $k$ of codes assigned.
To access $x_\ell$, first $f(\ell)$
is looked up in the table of~$f$.
Because the table is not initialized, the purported
code $j$ may not be correct, but $j$ is the
code of $\ell$ exactly if $0\le j<k$ and the entry
of $j$ in the table of $f^{-1}$ is $\ell$.
If not, the default initial value $v$ is returned in the case of
a read operation, and
the next available code is assigned to~$\ell$
in the case of a write operation.
The remainder of the access is simply a reading
or writing of $A[f(\ell)]$.
The structure is initialized by setting $k$ to~0.
In addition to the space needed to hold the actual
data in $A$, it needs space for the tables of
$f$ and $f^{-1}$ and
the counter $k$,
so that its redundancy is
$2 n\lceil{\log_2 n}\rceil+\lceil{\log_2(n+1)}\rceil$.

A family of methods due to Navarro \cite{Nav14}
combines the Hierarchic method above with the
folklore method.
The idea is, starting from Hierarchic, to replace the
nodes of height $\ge h+2$, for some $h\ge 0$, by an
instance of the folklore data structure.
This achieves the same effect as processing the
nodes that were removed
and obviates the need to
descend through these nodes during an access.
The initialization time is constant, the worst-case
access time is $\Theta(h+1)$, and the
redundancy is approximately $3 n$ for $h=0$ and
approximately $n$ for $h\ge 1$.

\section{Our Contribution}
\label{sec:contribution}

We give an upper-bound tradeoff that spans the entire
range from minimal time to minimal space.
Our main result is the following:

\begin{theorem}
\label{thm:main}
There is a clearable word array
that, for all given $n,t\in\mathbb{N}$,
can be initialized for universe size $n$ in constant time
and subsequently occupies at most
$n w+\lceil{n({t/{(2 w)}})^t}\rceil$
bits and supports $\Tvn{read}$ and $\Tvn{write}$
in $O(t)$ time.
\end{theorem}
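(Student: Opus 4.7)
The plan is to construct the data structure by induction on $t$, so that for every $n$ one obtains a clearable word array with the claimed parameters. In the inductive step ($t \ge 1$), I would partition the $n$ words of the abstract array into blocks of $B = B(t)$ consecutive words and make each block itself an instance of the same data structure at parameter $t-1$ applied to $B$ words; this allows the block to be initialized in $O(1)$ time on its first touch, by the inductive hypothesis. In addition, I would maintain a status structure, also built at parameter $t-1$, consisting of $\lceil n/(Bw)\rceil$ words that hold the $n/B$ block-status bits packed $w$ to a word. A read of index $\ell$ first consults the status bit of $\ell$'s block; if the bit is $0$ it returns $0$, otherwise it delegates to the block's inner \Tvn{read}. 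A write does the same, except that when the bit is found to be $0$ it first sets it to~$1$---which by induction initializes the block's inner structure in constant time---before delegating to the inner \Tvn{write}. The outer initialization only has to initialize the status structure, so it is constant-time by induction.

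Correctness is immediate by induction, and the time recurrence $T(n,t) = O(1) + T(\lceil n/(Bw)\rceil, t-1) + T(B, t-1) = O(t)$ is routine. For the base case, I would handle small constant~$t$ by a direct construction with $B = \Theta(w)$-word blocks; to avoid the $\Omega(w)$-time block zeroing that would violate constant-time access, each block keeps in its own memory a folklore-style encoding of the written positions, supported by the packed-word operations supplied by Lemma~\ref{lem:word}.

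The main obstacle is the sharp space bound. The redundancy satisfies
\[
R(n,t)\;\le\;\tfrac{n}{B}\,R(B,t-1)\;+\;\tfrac{n}{B}\;+\;R\!\Bigl(\Bigl\lceil\tfrac{n}{Bw}\Bigr\rceil,\,t-1\Bigr),
\]
and plugging the inductive hypothesis $R(m,t-1)\le m((t-1)/(2w))^{t-1}$ into the first term yields $n((t-1)/(2w))^{t-1}$, a quantity independent of~$B$ and larger than the target $n(t/(2w))^t$ by a factor of roughly $2w/(et)$. Closing this gap is where I expect the real work to be: rather than summing the three terms naively, I would share storage between the status structure and the corresponding block structures so that their contributions telescope instead of add, and choose~$B$ somewhat smaller than the naive balance point; the ceiling in the stated bound is exactly what absorbs the remaining integrality losses from packing status bits $w$ to a word.
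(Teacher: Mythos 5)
Your approach is genuinely different from the paper's, but it has gaps that cannot be closed by the moves you outline.

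First, the time recurrence $T(n,t)=O(1)+T(\lceil n/(Bw)\rceil,t-1)+T(B,t-1)$ is \emph{not} $O(t)$: since both the status structure and each block recurse at parameter $t-1$, every access makes two recursive calls, so $T(t)\le 2T(t-1)+O(1)$ solves to $\Theta(2^t)$. To get $O(t)$ you would have to read the status bit in $O(1)$ time, but a plain bit array of $n/B$ status bits cannot be cleared in constant time, which is the very problem being solved.

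Second, and more fundamentally, the redundancy gap you flag is not of a kind that ``sharing storage'' or ``telescoping'' can fix. The first term $\frac{n}{B}R(B,t-1)=n\bigl(\frac{t-1}{2w}\bigr)^{t-1}$ is \emph{independent of $B$}, so shrinking $B$ does nothing, and the status structure is far too small (only $n/(Bw)$ words) to absorb an excess that is larger than the target $n(t/(2w))^t$ by a factor of roughly $2w/(et)$. The root cause is structural: by making each block an independent instance at parameter $t-1$, you necessarily pay $R(B,t-1)$ bits \emph{per block}, whether or not the block has been touched, and no rearrangement of those bits makes the sum smaller. Your base case has the same flaw at a smaller scale: a folklore-style table of written positions inside a $\Theta(w)$-word block costs $\Theta(w\log w)$ bits per block, hence $\Theta(n\log w)$ total, vastly exceeding $\lceil n/w^t\rceil$.

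The paper avoids per-block redundancy entirely. It builds a $d$-ary colored tree of height $t$ (with $d\approx w/t$, so $d^t$ leaves and $2dt\le w$) over the array entries and stores the ``has this subtree been touched?'' information not in dedicated bits but as \emph{node colors} that are never materialized, except for $O(1)$ root bits per tree. The colors of all non-root nodes are reconstructed during a descent from \emph{histories}---packed sequences of navigation vectors along so-called light paths---and these histories are parked in the data slots of leaves that are still white, i.e.\ not yet written to, with a historian/proxy mechanism to shuffle a displaced black leaf's value out of the way. Thus the only fixed overhead is $O(1)$ bits per tree, i.e.\ $O(n/d^t)=O(n(t/(2w))^t)$ in total, which is exactly what your recursion cannot achieve. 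If you want to pursue a recursive design, the missing idea is precisely this: the auxiliary state for a subarray must live rent-free inside the subarray's own still-uninitialized cells, not in separately accounted redundancy.
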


If $w$ and hence (by assumption) $n$ are
bounded by constants, it is trivial to
realize a clearable word array
with constant initialization
and access times and zero redundancy
(initialize the array explicitly, i.e.,
use the Plain method of
Fredriksson and Kilpel\"ainen).
Given a constant $t\in\mathbb{N}$, we can therefore
assume without loss of generality that
$w\ge t^2$.
Then $({t/w})^2\le{1/w}$
and hence $({{2 t}/{(2 w)}})^{2 t}\le{1/{w^t}}$.
Theorem~\ref{thm:main}
(used with $t$ doubled) thus implies that for
all constant $t\in\mathbb{N}$, there is a clearable
word array that can be initialized in constant
time, executes accesses in constant time
and has redundancy $\lceil{{n/{w^t}}}\rceil$.
The best previous constant-time solution, due to
Navarro~\cite{Nav14} and discussed above,
has \mbox{redundancy $n+o(n)$.}

At the other end of the time-space tradeoff,
for $t=\lceil{\log_2 n}\rceil$, the
redundancy of
Theorem~\ref{thm:main} is $1$, i.e., the
constant-time initialization costs only a single bit
and accesses are still supported in
logarithmic time.
If an initialization time of $\Theta(n)$ is
acceptable, a clearable word array with
constant-time access can obviously
be realized with zero redundancy---this is
again the Plain method of Fredriksson
and Kilpel\"ainen.
On the other hand,
the redundancy cannot be reduced below our
bound of~1 for any access times
unless the initialization writes to at least $n$ words,
which needs at least $n$ steps.
To see this, assume that a clearable word array
with universe size $n$ is represented in $N$ bits for
some $N\in\mathbb{N}$.
Because the client sequence can be in any one
of $2^{n w}$ states, any two of which can be
distinguished through \Tvn{read} operations,
whereas its representation can be in
only $2^N$ states, we must have $N\ge n w$,
irrespectively of all operation times.
Moreover, if $N=n w$, every state of the client
sequence is represented by exactly one
bit pattern of its representation.
Since the client sequence is in a well-defined
state immediately after the initialization,
this is impossible unless each of the $n w$ bits
of its representation is forced to one specific
value during the initialization, i.e., unless the
initialization writes to at least $n$~words.

Note that it is a responsibility of the user of
a clearable word array initialized for
universe size $n$ to ensure that $\ell<n$
in all calls of the form $\Tvn{read}(\ell)$
or $\Tvn{write}(\ell,x)$ issued
to the data structure.
Whereas the data structure can easily 
check the conditions $\ell\ge 0$ and $0\le x<2^w$,
when operated close to its minimum space it cannot
afford to store the integer~$n$.
Thus illegal calls of its operations may go
undetected and may lead to attempted accesses
to memory words outside of the area
assigned to the data structure.

Our result can be seen as a second application of the
\emph{light-path technique}, which was introduced
(but not named) in~\cite{HagK16}
and used there to construct
space-efficient nonsystematic
choice dictionaries.
From a technical perspective, the situation is simpler here,
as there is no need to store data in a particular
\emph{compact representation} and to provide
conversion to and from the compact representation.
This gives us an opportunity to illustrate the
light-path technique in a purer setting.
At a more abstract level, the fundamental idea is to
upset the structure of a simple table slightly in order to
accommodate additional information in the table.
Whereas this principle has been used before
\cite{FiaMNS91,FraG08,Mun86},
curiously, it has not so far been employed in the
setting of initializable arrays even though it seems
particularly natural there.
It may be noted that the
$c$-color choice dictionaries
of \cite{HagK16} could be
used directly as initializable arrays,
but efficiently so only for arrays whose elements
are drawn from a very small range $\{0,\ldots,2^b-1\}$.
This is because each element of that range
would be considered a separate color, i.e.,
we would have $c=2^b$.

Given the clearable word array of
Theorem~\ref{thm:main}, it is easy to derive a more
general data structure that, for some integer $b$
with $1\le b\le w$, maintains a
client sequence in $\{0,\ldots,2^b-1\}^n$,
initially $(0,0,\ldots,0)$, under reading
and writing of individual elements of
the sequence.
Simply pack the $n$ elements of the client sequence
tightly in $\lceil{{{n b}/w}}\rceil$ words of $w$ bits each,
initialize the used part of the last word to 0,
maintain the other words in a clearable word array,
inspect a $b$-bit element of the client sequence
by reading the at most two words over which the
$b$ bits spread, picking out the relevant pieces of the words
and concatenating the pieces, and update a $b$-bit
element of the client sequence correspondingly
by splitting the new value into at most two pieces
and storing each piece appropriately in a word
without disturbing the rest of the word.
The execution times are within a constant factor of
those of the clearable word array, and the number of bits needed is
at most $n b+\lceil{n({t/{(2 w)}})^t}\rceil$.

We can also easily derive a data structure more
general than that of Theorem~\ref{thm:main} in that
the client sequence is initialized to
$(g(0),\ldots,g(n-1))$,
where $g:\{0,\ldots,n-1\}\to\{0,\ldots,2^w-1\}$
is some function, rather than to $(0,0,\ldots,0)$.
The simple idea is to swap the representations
of the ``internal'' and ``external'' initial values.
Both $\Tvn{read}(\ell)$ and $\Tvn{write}(\ell,x)$
then begin by evaluating $g(\ell)$.
If reading the value associated with $\ell$ in a
normal clearable word array yields the value~0,
$\Tvn{read}(\ell)$ returns $g(\ell)$.
If the value read is $g(\ell)$, $\Tvn{read}(\ell)$
returns~0, and every other value read is
returned as it is.
Similarly, if $x=g(\ell)$, $\Tvn{write}(\ell,x)$
actually writes the value~0 to the normal
clearable word array, $x=0$ causes the value
$g(\ell)$ to be written, and every other
value of $x$ is written as it is.
The initialization and access times are those of
Theorem~\ref{thm:main} plus whatever time is needed to
initialize $g$ and to evaluate it on one argument,
respectively, and the space requirements are those of
Theorem~\ref{thm:main} plus those of $g$.
It is easy to see that the generalizations described
in this and the previous paragraph can be
combined.

Very recently, giving a clever twist to the
folklore method, Katoh and Goto~\cite{KatG17}
devised a clearable word array that executes
every operation in constant time but,
when the universe size is~$n$,
uses just $n w+1$ bits.
Yet another solution was indicated by
Loong, Nelson and Yu~\cite{LooNY17}.

\section{The Construction}

In this section we prove Theorem~\ref{thm:main}.
At a very low and technical level, we need the following
staple of word-RAM computing.

\begin{lemma}[\cite{FreW93,HagK16}]
\label{lem:word}
Given a nonzero integer $\sum_{i=0}^{w-1}2^i b_i$,
where $b_i\in\{0,1\}$ for $i=0,\ldots,w-1$,
constant time suffices to compute $\max I$
and $\min I$, where $I=\{i\mid 0\le i\le w-1$ and
$b_i=1\}$.
\end{lemma}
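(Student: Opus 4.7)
The plan is to adapt the classical square-root decomposition of Fredman and Willard. Set $s=\lceil\sqrt{w}\,\rceil$, which by the assumed loading operation is available in constant time, put $k=\lceil w/s\rceil=O(\sqrt{w})$, and view $x$ as the concatenation of $k$ consecutive blocks of $s$ bits. The strategy is first to determine the block that contains the extremal $1$-bit of $x$, and then to locate the extremal $1$-bit inside that block.

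The key step is to compute, in constant time, a summary word $y$ of $k$ bits whose $i$th bit is $1$ iff block $i$ of $x$ is nonzero. Let $T$ and $L$ denote the constants whose binary representations carry a single $1$ in the top respectively bottom position of every block; each can be built from $s$ and $k$ in $O(1)$ arithmetic operations. The expression $((x\,\textsc{or}\,T)-L)\,\textsc{and}\,T$ yields a word in which the top bit of block $i$ is set exactly when block $i$ of $x$ was nonzero; a single multiplication by a suitable constant then compacts these $k$ marker bits into the low-order positions of $y$. With $y$ in hand, finding $\max I$ in $x$ reduces to (a) finding the position $i^*$ of the highest set bit of $y$, which identifies the relevant block, and (b) finding the position $j^*$ of the highest set bit inside the $s$-bit block $i^*$, extracted from $x$ by a shift and a mask. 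Then $\max I=s\cdot i^*+j^*$, and symmetrically for $\min I$.

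Both subproblems (a) and (b) are instances of the same problem on an integer of at most $s=O(\sqrt{w})$ bits. I would close the recursion with one further application of the same decomposition, reducing the problem to integers of $O(w^{1/4})$ bits, and then resolve the latter by isolating the extremal bit via $z\,\textsc{and}\,(-z)$ and locating it through multiplication by a De Bruijn-style constant followed by a shift-and-mask lookup encoded inside a single word; all constants involved can be manufactured in $O(1)$ time from $s$. The main obstacle I expect is the correctness of the compaction step: one has to check that neither the subtraction $((x\,\textsc{or}\,T)-L)$ nor the compacting multiplication overflows, and that no borrow or carry leaks across block boundaries. These are precisely the technicalities treated in~\cite{FreW93,HagK16}, to which I would defer.
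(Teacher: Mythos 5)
The paper does not actually prove this lemma; it states it with a citation to Fredman--Willard~\cite{FreW93} and to~\cite{HagK16} and moves on, so there is no in-paper argument to compare against. Your sketch reconstructs the classical $\sqrt{w}$-decomposition from those sources, and the overall plan --- summarize which $\lceil\sqrt{w}\,\rceil$-bit blocks are nonzero, recurse once, and finish on $O(w^{1/4})$-bit integers --- is the standard one.

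There is, however, a concrete bug in the centerpiece formula. The expression $((x\,\textsc{or}\,T)-L)\,\textsc{and}\,T$ does not detect all nonzero blocks: it reports block $i$ as nonzero exactly when some bit \emph{other than the top one} of block $i$ is set. If the unique set bit of a block is its top bit, then $(x\,\textsc{or}\,T)$ equals $2^{s-1}$ in that block, subtracting the $1$ contributed by $L$ borrows all the way up and clears that top bit, and the mask by $T$ returns $0$, so the block is misclassified as empty. This is not one of the carry/overflow technicalities you propose to defer to the references --- those concern leakage between blocks, and there is none here because every block of $x\,\textsc{or}\,T$ has its top bit set --- it is a logical error that already arises on a single isolated block. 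The standard repair is to run your test on $x$ with the top bit of each block masked out and then $\textsc{or}$ the result with $x\,\textsc{and}\,T$. A second, smaller gap: the closing step you describe, isolating the extremal bit via $z\,\textsc{and}\,(-z)$ and then applying a De~Bruijn multiplication, computes the position of the \emph{lowest} set bit and so handles $\min I$, but it does not give $\max I$. For the most significant bit of an $O(w^{1/4})$-bit integer you need a different finish, for instance a parallel comparison of the integer against all $O(w^{1/4})$ relevant powers of two packed into one word, followed by a population count.
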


Let a \emph{colored tree} be an ordered outtree, each of
whose leaves is either \emph{white} or \emph{black}.
Given a colored tree $T$, we extend the colors at
the leaves of $T$ to its inner nodes as follows:
If the leaf descendants of an inner node $u$
all have the same color (white or black),
then $u$ has that same color.
If $u$ has both a white and a black leaf
descendant, $u$ is \emph{gray}.
Clearly every ancestor of a node $v$ has the
same color as $v$ or is gray.
In particular, every ancestor of a gray node is gray.
Define the \emph{navigation vector} of an inner node
to be the sequence of the colors of its children
in the order from left to right.

\begin{figure}[b!]
\begin{center}
\scalebox{0.96}{%
\includegraphics{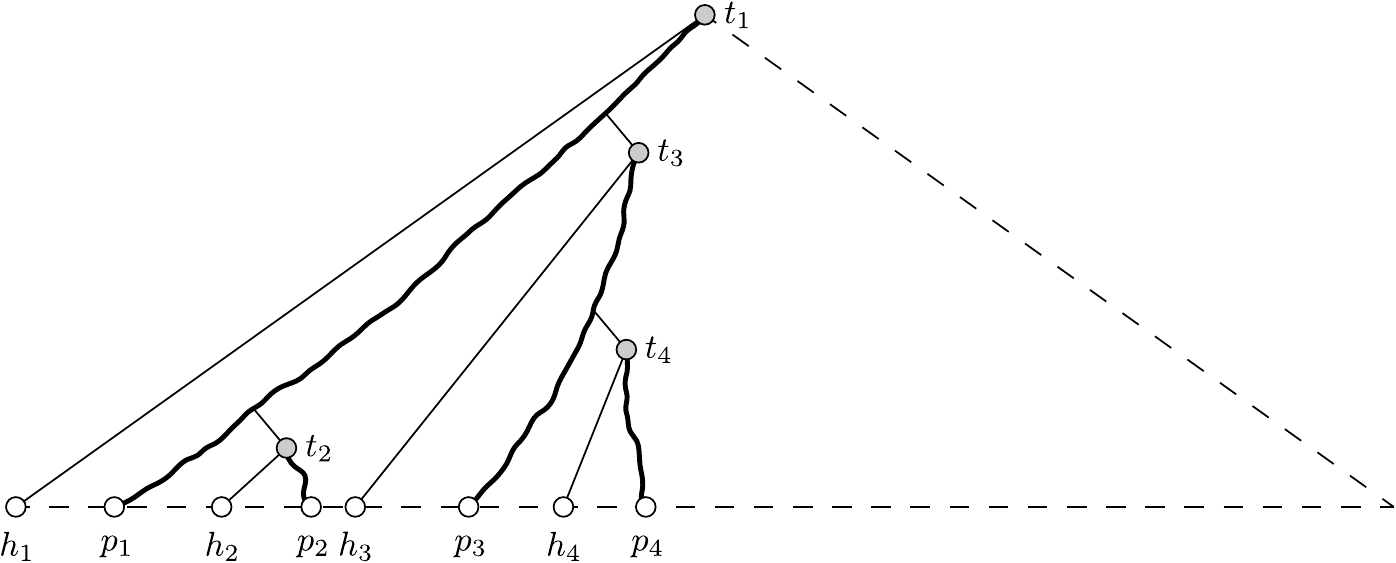}}
\end{center}
\caption{Example light paths (drawn thicker).
Top nodes, historians and proxies are labeled
``$t$'', ``$h$'' and ``$p$'', respectively,
and a subscript identifies the associated
light path.}
\label{fig:lightpaths}
\end{figure}

Recall that the \emph{left spine} of a rooted
ordered tree $T$ is the maximal path in $T$
that starts at the root of $T$ and,
whenever it contains an inner node $u$,
also contains the leftmost child of~$u$.
Define the \emph{preferred child} of a white or gray
inner node in a colored tree $T$ to be its leftmost
gray child if it has at least one gray child,
and its leftmost white child otherwise.
Call an edge in $T$ \emph{light} if it leads
from a gray inner node to its preferred child
or lies on the left spine of a subtree
of $T$ whose root is white and has a gray parent of which
it is the preferred child.
In other words, every gray inner node picks
the edge to its preferred child to be light,
whereas a white inner node does so only if
``prompted'' by its parent.
The light edges induce a collection of node-disjoint
paths called \emph{light paths}, each of which
ends at a leaf in~$T$.
When $P$ is a light path that starts at a
(gray) node $u$ and ends at a (white)
leaf $v$, we call $u$ the \emph{top node},
$v$ the \emph{proxy} and the leftmost leaf
descendant of $u$ (that may coincide with $v$)
the \emph{historian} of $P$ and of every node on $P$.
These concepts are illustrated in Fig.~\ref{fig:lightpaths}.
A gray node that is not the root of~$T$
is a top node exactly if it is not the preferred
child of its parent, i.e., if it has at least
one gray left sibling.
No proper ancestor of a top node $u$ can have
a descendant of $u$ as its leftmost leaf descendant,
so a leaf is the historian of at most one light path.
If $h$ is the historian of a light path $P$, the
top node and the proxy of $P$ are also said to be the
top node and the proxy, respectively, of $h$.
A leaf $\ell$ cannot be the historian
of one light path and the proxy of another,
since otherwise the two corresponding
top nodes would both be ancestors of $\ell$
and the path between them would contain
only gray nodes and be
part of a light path, a contradiction.
A similar argument shows that in the
left-to-right order of the leaves of~$T$,
no historian or proxy lies 
strictly between a historian and its proxy.
Define the \emph{history} of a light path that
contains the nodes $u_1,\ldots,u_k$, in that
order, to be the sequence $(q_1,\ldots,q_{k-1})$,
where $q_i$ is the navigation vector of $u_i$,
for $i=1,\ldots,k-1$ ($u_k$, as a leaf,
has no navigation vector).

The following lemma describes the
work-horse of our data structure.

\begin{lemma}
\label{lem:tree}
Let $d$ and $t$ be given positive integers with
$2 d t\le w$ such that $d$ is a power of\/~$2$.
Then there is a clearable word array that
can be initialized for universe size $n=d^t$
in constant time and subsequently occupies
$n w+2$ bits and, if given access
to the parameters $d$ and $t$, supports
\Tvn{read} and \Tvn{write} in $O(t)$ time.
\end{lemma}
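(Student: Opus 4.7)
The plan is to use a complete $d$-ary tree $T$ of height $t$, whose $n = d^t$ leaves correspond bijectively to the array indices in left-to-right order. A physical array $A$ of $n$ slots of $w$ bits each provides one slot per leaf, accounting for $nw$ of the $nw+2$ bits; the two extra bits record global information about the root. I interpret the coloring of the previous definition operationally: a leaf is black if its client value is still the default $0$ (never overwritten) and white otherwise, with inner-node colors propagated as specified. The storage scheme uses each light path $P$ to share information between two of its leaves: the proxy slot $A[p]$ is repurposed to store the \emph{history} of $P$, while the true client value of index $p$ is relocated into the historian slot $A[h]$. Every white leaf that is neither a historian nor a proxy stores its own value directly, and black leaves need no storage since their value is $0$ by default. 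A history fits in a single word because it consists of at most $t$ navigation vectors, each a $d$-tuple over the three colors white, black and gray, hence encodable in $2d$ bits, for at most $2dt \le w$ bits total.

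To execute $\Tvn{read}(\ell)$ and $\Tvn{write}(\ell,x)$, I walk from the root down towards leaf~$\ell$, maintaining at each step enough information to know the current node's color and, if gray, the color of the child leading to~$\ell$. The two global bits record whether the root is entirely black, entirely white, or gray; in the last case the root is itself a top node whose historian is leaf~$0$, and I arrange for its proxy to be locatable in $O(1)$ time so that the history word is immediately available. Thereafter, whenever the descent enters a new gray node heading a light path not yet encountered, the corresponding history is loaded from the appropriate $A[p]$; while the descent remains on the same path, the required navigation vectors are already in hand. Entering a black subtree lets a read return~$0$ immediately; entering a white subtree lets us read or write leaf~$\ell$ directly, with a special case when $\ell$ is itself a proxy or historian. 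Using Lemma~\ref{lem:word} to extract fields from the packed navigation vectors, each of the $O(t)$ levels costs $O(1)$ time.

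The main obstacle is to argue that a write causing a color change at leaf~$\ell$ requires only $O(1)$ work per level of the affected light-path forest, and to resolve the corner cases of the storage convention (for instance when the historian and proxy of a light path coincide, or when a historian is itself a white leaf whose own value must be preserved, in which case the convention must shunt the displaced value into the two global bits or onto an adjacent path). The reorganizations after a write fall into four local cases: (i)~a black ancestor of $\ell$ turning gray, which spawns (a suffix of) a new light path and requires copying the former leaf value into the new historian slot; (ii)~a gray ancestor turning white, which removes a light path and restores the proxy value to its own slot; (iii)~a gray ancestor remaining gray but changing its navigation vector or preferred child, which rewrites a single history word and may shift the historian/proxy roles as the left spine moves; and (iv)~the creation or destruction of top-node status at neighbouring siblings triggered by (i)--(iii). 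In every case the affected history is a packed bit-string of known format that can be updated with a constant number of shifts, masks and Lemma~\ref{lem:word} queries, giving $O(1)$ work per level and $O(t)$ per write. Initialization simply sets the two global bits to the ``root is black'' state and is therefore constant-time.
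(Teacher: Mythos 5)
Your proposal follows the paper's overall architecture (conceptual $d$-ary tree of height~$t$, histories packed into a $\le 2dt$-bit word, descent with $O(1)$ work per level), but it inverts a detail that is actually the crux of the construction, and the inversion breaks the time bound. You store the history of a light path in the \emph{proxy} slot $A[p]$ and relocate the proxy's client value to the \emph{historian} slot $A[h]$. In the paper it is exactly the other way around: $A[h]$ holds the history, and $A[p]$ (when needed) holds the displaced value $x_h$. This is not a cosmetic choice. When a descent reaches a top node $u$, it must retrieve $u$'s history in $O(1)$ time \emph{before} it knows anything about the light path below~$u$. The historian $h$ is the leftmost leaf descendant of~$u$, whose index is a closed-form arithmetic expression in $u$'s coordinates and is therefore available immediately. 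The proxy~$p$, by contrast, is the leaf at the end of the light path, and locating it requires following the light path node by node using the very history you are trying to find. Your remark that you ``arrange for its proxy to be locatable in $O(1)$ time'' acknowledges the problem but supplies no mechanism; for any top node other than the root there is none without first reading the history. In short, storing the history at the proxy is circular, and the leftmost-leaf-as-history-slot choice is the load-bearing idea that your write-up misses.

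A second, related gap: your storage convention leaves no place for the historian's own client value. You place $x_p$ in $A[h]$ and the history in $A[p]$; every leaf's value must then live somewhere, and when the historian itself carries a nontrivial value (in your color convention, when $h$ is white, i.e.\ written), that value is homeless. The paper's convention avoids this by exploiting that the proxy is \emph{always} the terminal (unwritten) leaf of a light path and therefore has a trivial value; its slot is thus guaranteed to be free to hold $x_h$ when $h$ needs it. Finally, the write-side case analysis in your last paragraph is too coarse: the paper needs a careful five-way case split (whether $v$ is, becomes, or ceases to be a preferred child or top node; whether $h$ and $p$ coincide; whether a sibling gains or loses top-node status), together with a specific order of reads and writes to $A[h_u]$, $A[p_u]$, $A[p'_u]$, $A[h^*]$ so that nothing is overwritten before it is copied. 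Your cases (i)--(iv) gesture at this but would have to be reworked once the historian/proxy roles are corrected, since the data-movement pattern changes. The color swap you made (black~$=$~unwritten, white~$=$~written) is by itself harmless if applied consistently to the light-path rules, but combined with the historian/proxy inversion it produces an inconsistent scheme.
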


\begin{proof}
Without loss of generality assume that $d\ge 2$.
We use a conceptual colored tree $T$ that is
a complete $d$-ary tree of height $t$ and identify
the $n$ leaves of $T$, in the order from left to right,
with the integers $0,\ldots,n-1$.
Let $r$ be the root of $T$ and,
for each node $u$ in $T$, let $T_u$ be the
maximal subtree of $T$ rooted at~$u$.
We represent a node $u$ of height $j$ in $T$
through the pair $(j,k)$, where $k$ is the number
of nodes in $T$ of the same height as $u$
and strictly to its left
(in other words, the nodes on each level in $T$
are numbered consecutively in the order from left
to right, starting at~0).
Then navigating in $T$ is easy:
If $u$ is not the root $r$, its parent is
(represented through) $(j+1,\lfloor{k/d}\rfloor)$,
if $u$ is not a leaf, its children are
$(j-1,k d),\ldots,(j-1,k d+(d-1))$,
$u$'s leftmost leaf descendant is $(0,k d^j)$
(identified with the integer $k d^j$),
and if $u$ is not a leaf and $\ell$ is a
leaf descendant of $u$, then $\Tvn{viachild}(u,\ell)$,
the child of $u$ that is an ancestor of $\ell$,
is $(j-1,\lfloor{{\ell/{d^{j-1}}}}\rfloor)$.
The assumption that $d$ is a power of~$2$ ensures
that we can compute the necessary powers of $d$
in constant time by means of multiplication
and left shift.
This requires the availability of $\log_2 d$,
which can be computed from $d$ in constant time
according to Lemma~\ref{lem:word}.

The actual data is stored in a word array $A$ with
index set $\{0,\ldots,n-1\}$ and in two additional
\emph{root bits}.
The three colors white, gray and black
are encoded in two bits, the navigation vector of an inner
node in $T$ is represented by the $2 d$-bit concatenation of 
the representations of its $d$ (color) elements, and the
history of a $k$-node light path is represented by the
$2 d (k-1)$-bit concatenation of the representations of
its $k-1$ (navigation-vector) elements.
The relation $2 d t\le w$ ensures that every history
fits in a $w$-bit word.
Assume that a history of
fewer than $w$ bits is ``right-justified'' in the
word so that the position in the word of the
navigation vector of a node depends only
on the height of the node.

With the aid of an algorithm of Lemma~\ref{lem:word},
the preferred child
of a given white or gray inner node $u$ in $T$ can be computed
in constant time from the navigation vector of $u$ or
a history that contains that navigation vector.
This may need a couple of bit masks (informally, ones
that correspond to all nodes having the same color)
that can easily be obtained via multiplication
with the integer
$1_{d t,2}={{(2^{2 d t}-1)}/3}$, whose $(2 d t)$-bit
binary representation is $0101\cdots 0101$.
Because $2^{2 d t}$ may not be representable in a $w$-bit word
(namely if $2 d t=w$),
the computation of $1_{d t,2}$ needs a little care,
but is still easy to do in constant time.

The client sequence $(x_0,\ldots,x_{n-1})$ is represented
in $A[0],\ldots,A[n-1]$ and the two root bits
according to the following storage invariants:
First, the two root bits indicate the color of the root $r$ of~$T$.
Second, for $\ell=0,\ldots,n-1$,

\begin{itemize}
\item
if $\ell$ is a historian, $A[\ell]$ stores the history of the proxy of $\ell$
(hence the term ``historian''),
\item
if $\ell$ is black and not a historian,
$A[\ell]$ stores $x_\ell$,
\item
if $\ell$ is a proxy whose historian $h$ is black,
$A[\ell]$ stores $x_h$
(as a ``proxy'' for $h$), and
\item
if $\ell$ is white and neither a historian nor a proxy
whose historian is black, the value of $A[\ell]$
may be arbitrary.
\end{itemize}

Note that because every proxy is white, for each $\ell\in\{0,\ldots,n-1\}$
exactly one of the four cases above applies.
In particular, although a proxy may coincide with its
historian, this is not the case if the historian is black.
The data structure is initialized by coloring $r$
white (i.e., by setting the root bits accordingly).

In terms of the abstract array $a$, the leaf colors white
and black signify ``not yet written to, and therefore still
containing the initial value 0'' and ``written to at least once'',
respectively.
For the actual array $A$, this translates approximately into
white and black meaning ``not initialized'' and
``initialized to a meaningful value'', respectively.

The data structure does not explicitly store the color
of any node except $r$.
Instead node colors must be deduced from histories.
It turns out that the colors of all nodes other than $r$ are implied by
the histories of the light paths.
A white leaf $\ell$ offers potential for storing a history
(namely in its associated word $A[\ell]$),
but we cannot know in advance where to find a white leaf.
This motivates the 
introduction of historians and proxies.
We actually need the history of a light path $P$ when,
during a descent in $T$ from $r$ to a leaf,
we reach the top node of $P$.
The historian of $P$ provides a fixed place (namely at the
leftmost leaf descendant) at which to look for the history,
but if the historian is black, then its own data must
be accommodated somewhere else---this is the role
of the (white) proxy.
How this works is perhaps best illustrated
by the following detailed description
of the realization of the operation \Tvn{read},
which basically carries out a descent in~$T$.
The call $\Tvn{leftmostleaf}(u)$ is assumed to
return (the integer identified with)
the leftmost leaf descendant of the node~$u$.

\goodbreak

\begin{tabbing}
\quad\=\quad\=\quad\=\quad\=\quad\=\quad\=\kill
$\Tvn{read}(\ell)$:\\
\>$u:=r$; $(*$ start at the root $*)$\\
\>\textbf{while} $u$ is gray \textbf{do}\\
\>\>\textbf{if} $u$ is a top node \textbf{then}
 $(*$ switch to a new history $*)$\\
\>\>\>$h:=\Tvn{leftmostleaf}(u)$; $(*$ $u$'s historian $*)$\\
\>\>\>$H:=A[h]$; $(*$ $u$'s history $*)$\\
\>\>$u:=\Tvn{viachild}(u,\ell)$; $(*$ continue towards $\ell$ $*)$\\
\>\textbf{if} $u$ is white \textbf{then return} 0;
 $(*$ the initial value $*)$\\
\>$(*$ now $\ell$ is black $*)$\\
\>\textbf{if} $u=r$ or $\ell\not=h$
 \textbf{then return} $A[\ell]$; $(*$ $\ell$ is neither a historian
 nor a proxy $*)$\\
\>$(*$ now $\ell$ is a black historian $*)$\\
\>\textbf{return} $A[p]$,
 where $p$ is the leaf at the end of the light path
that contains $u$'s parent;
\end{tabbing}

The procedure discovers a white ancestor of~$\ell$
and returns 0, determines that $\ell$ is black
and not a historian and returns $A[\ell]$,
or identifies $\ell$ as a black historian
and returns $A[p]$, where $p$ is the
proxy of $\ell$.
In all cases, the return value is correct.

Whenever the color of a node $u$ is queried, either $u=r$, in which
case the color of $u$ is given by the root bits, or
the color of $u$ can be deduced in constant time from
the history stored in $H$, one of whose elements is
the navigation vector of the parent of~$u$.
Similarly, if $u\not=r$, we can decide in constant time whether
$u$ is a top node by looking at the navigation vector of its parent.
The light path that contains $u$'s parent
can be followed in constant time per node, again
by inspection of~$H$.
Thus $\Tvn{read}$ can be executed
in $O(t)$ time.

To execute $\Tvn{write}(\ell,x)$, we carry out two phases.
The purpose of the first phase is to take the
data structure to a legal state in which $\ell$
is black and all values of the client sequence
$(x_0,\ldots,x_{n-1})$ except possibly $x_\ell$ are correct,
i.e., unchanged.
The second phase concludes the writing
by setting $x_\ell$ to $x$.
In the description of the two phases, we leave to
the reader details such as how to determine the
color of a given node;
in all cases, one can proceed similarly
as in the implementation of \Tvn{read}.

The first phase begins by following the path $P$ in $T$
from $r$ to $\ell$ until encountering a node
that is not gray.
This can be done similarly as in the implementation
of $\Tvn{read}$:
Each node visited is tested for
being a top node, and at each top node a new
history is fetched and subsequently used.
This computation, in particular, can determine the
color of $\ell$.
If $\ell$ is already black, the first phase terminates
without modifying the data structure.
Assume in the remaining discussion of the first phase
that $\ell$ is white and consider the consequences of
an \emph{update} that changes the color of $\ell$ from
white to black.
We will use the terms ``old'' and ``new'' to
describe the situation before and after the update,
respectively.

Because the color of an inner node in $T$ is a
function of the colors of its children, only
nodes on $P$ can change their color as a
result of the update.
The first phase proceeds to find the first node
$v$ on $P$
(i.e., the node on $P$ of minimal depth)
that changes its color.
The following observations show that this can
be done in a single traversal of~$P$
and characterizes the possible scenarios in
a useful way.
If some proper ancestor of $\ell$
is white (before the update), all proper ancestors of the
first white node $\widetilde{v}$ on $P$ are gray both before and
after the update, and all descendants of $\widetilde{v}$
on $P$ other than $\ell$ are white before and gray after the update.
Thus $v=\widetilde{v}$.
In the opposite case, namely if all proper ancestors
of $\ell$ are gray, let $\overline{v}$ be the last node
on $P$ that has a white or gray sibling if there is at
least one such node, and take $\overline{v}=r$ otherwise.
It is easy to see that all proper ancestors of
$\overline{v}$ are gray both before and after the
update and, by backwards induction on $P$, that
all descendants of $\overline{v}$,
including $\overline{v}$ itself, are black after the update.
In this case, therefore, $v=\overline{v}$.

As can be seen from the observations above, no descendant
of $v$ has more than one gray child before or after
the update under consideration.
Therefore the only node in $T_v$ that can be
a top node before or after the update is $v$ itself,
the only node in $T_v$ that can be a historian
before or after the update is the leftmost
leaf descendant $h_v$ of $v$, and before as
well as after the update at most one node
in $T_v$ is a proxy.
Moreover, at most one node in $T$ other than $v$ can become
or stop being a top node as a result of the update,
and this node, if it exists, must be the leftmost
gray sibling of $v$ and to the right of~$v$.

If $v=r$, change the root bits to reflect
the new color of the root.
Otherwise compute $u$ as the top node of the
light path that contains the (gray) parent of~$v$,
let $h_u$ be the historian of $u$ (before and
after the update) and let $p_u$ and $p'_u$
be the proxies of $u$ before and after the
update, respectively, which can be found by
following the old and new light paths that
start at $u$.
Store the new history of $p'_u$ in $A[h_u]$.
In particular, this registers the new color of~$v$.
To compute the history, it suffices to record
the new navigation vectors encountered on the
path in $T$ from $u$ to $p'_u$.
Now consider five cases 
that together cover all possible situations
and do not overlap. 
Even though every color change is irreversible,
Cases 1 and 2 show some aspects of being
reverses of each other, and so do
Cases 4 and~5. These four cases are illustrated in
Fig.~\ref{fig:setcolor}.

\begin{figure}[b!]
\begin{center}
\scalebox{1}[0.93]{%
\includegraphics{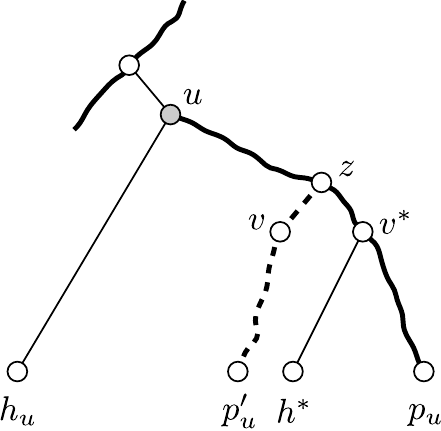}}\hspace{22mm}
\scalebox{1}[0.93]{%
\includegraphics{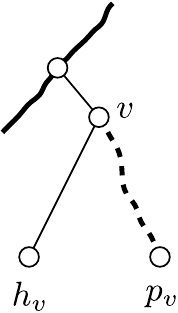}}
\end{center}
\caption{Left: The situation of Case 1 and, after a swap
of the labels ``$p_u$'' and ``$p'_u$'', also
of Case 2.
Right:
The situation of Cases 4 and 5.}
\label{fig:setcolor}
\end{figure}

\emph{Case~$1$:}
$v$ has a parent $z$ and is the preferred child of~$z$
after the update.
Thus $v$ changes its color from white to gray without
becoming a top node.
If $h_u$ is black before the update, then
execute $A[p'_u]:=A[p_u]$,
which moves $x_{h_u}$ from
the old to the new proxy of $h_u$.
This overwrites no relevant information, as
$p'_u$ is white and neither a historian nor a
proxy before the update unless
$p'_u$ coincides with $h_u$ or $p_u$,
in which case the assignment is not carried out
or has no effect.
Let $v^*$ be the preferred child of
$z$ before the update
and let $h^*$ be the leftmost
leaf descendant of $v^*$.
If $v^*$ is white before the update
(this includes the case $v^*=v$),
nothing more needs to be done.
If $v^*$ is gray (before and
after the update), it is a right sibling of $v$,
and it becomes a new top node
whose historian $h^*$
and proxy $p_u$ must have their associated
information updated accordingly.
To this end first execute
$A[p_u]:=A[h^*]$ and subsequently store
in $A[h^*]$ the history of the new
light path that starts at $v^*$ and
ends at $p_u$.
If $p_u=h^*$, the two assignments write
to the same word, but then any relevant
information present in $A[h^*]$ before the update
was already copied to $A[p'_u]$.

\emph{Case~$2$:}
$v$ has a parent $z$ and
is the preferred child of $z$ before,
but not after the update.
After the update, $v$ is black and no
descendant of $v$ is a historian or a proxy,
except that $h_v$
may coincide with $h_u$.
Let $v^*$ be the preferred child of
$z$ after the update
and let $h^*$ be its leftmost
leaf descendant.
If $v^*$ is gray, it is a right sibling
of $v$ and a top node
with historian $h^*$ and proxy
$p'_u$ before the update, whereas
after the update $p'_u$ is the proxy
of $u$ and $h^*$ is neither a historian
nor a proxy unless $h^*=p'_u$.
If
$h^*$ is black,
then execute $A[h^*]:=A[p'_u]$, which
moves $x_{h^*}$ to the correct place
and overwrites a history that is no longer useful.
Finally, independently of
the color of $v^*$ and
as in Case~1,
if $h_u$ is black,
then execute $A[p'_u]:=A[p_u]$.

In the remaining cases 3--5 $v$ is a preferred child
neither before nor after the update, so
there are no changes to light paths
outside of $T_v$
(i.e., the set of light edges outside
of $T_v$ remains the same).
In particular, $p'_u=p_u$.
Moreover, $v$ is not a leftmost child.

\emph{Case~$3$:}
$v$ is a leaf with at least one white left sibling.
There are no changes to light paths, so
nothing needs to be done.

\emph{Case~$4$:} $v$ is a top node after the update.
Before the update, $v$ is white, so no
descendant of $v$ is a historian or a proxy
at that time (informally, no information
is stored below~$v$).
Compute the proxy $p_v$
of $v$ after the update and store the
new history of $p_v$ in $A[h_v]$.
This involves following the new light path
that starts at $v$ and recording the new
navigation vectors encountered on the way.

\emph{Case~$5$:} $v$ is a top node before the update.
Because $v$ is black after the update, no descendant of $v$
is a historian or a proxy at that time.
Before the update, since $\ell$ is the only white
descendant of $v$, it is its proxy.
If $h_v$ is black (i.e., if $h_v\not=\ell$),
then copy the value of $A[\ell]$, namely $x_{h_v}$,
to $A[h_v]$.
This overwrites an old history that is no longer useful.

The second phase of the execution of
$\Tvn{write}(\ell,x)$ simulates the execution
of $\Tvn{read}(\ell)$ until the point when
the routine is ready to return as its answer
the value of $A[i]$ for some $i$
(that is either $\ell$ or the proxy of $\ell$).
Instead of returning $A[i]$, it finishes by
storing $x$ in $A[i]$.
Since $i$ is not a historian,
it is easy to see that a subsequent call of
$\Tvn{read}(\ell)$ will return $x$
and that the update of $A[i]$ leaves the data structure in
a legal state and does not change the value
of any elements of the client sequence
$(x_0,\ldots,x_{n-1})$ except $x_\ell$.
\end{proof}

The next lemma and its proof show how to handle
the case of an ``incomplete tree'' elegantly and,
following the initialization, without any
overhead to test for special cases.

\begin{lemma}
\label{lem:partialtree}
There is a clearable word array that,
for all given $n,d,t\in\mathbb{N}$
with $2 d t\le w$ and $n\le d^t$
such that $d$ is a power of\/~$2$, can be initialized
for universe size $n$ in constant time and subsequently
occupies $n w+2$ bits and,
if given access to $d$ and $t$,
supports \Tvn{read} and \Tvn{write} in $O(t)$ time.
\end{lemma}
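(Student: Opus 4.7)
The plan is to reuse the data structure of Lemma~\ref{lem:tree} without change, taking its conceptual tree $T$ to be the complete $d$-ary tree of height $t$, but physically allocating only the first $n$ entries of the array $A$ and pretending from the outset that the $d^t-n$ rightmost leaves are black. After a suitable constant-time initialization, $\Tvn{read}$ and $\Tvn{write}$ are inherited verbatim from Lemma~\ref{lem:tree}, so no per-operation test is needed.

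First I would verify that this pretense is safe, i.e., that a call with $\ell<n$ never touches $A[i]$ for $i\ge n$. Each such call inspects $A$ only at $\ell$, at a historian, or at a proxy. Proxies are white leaves, and since the virtual leaves $\{n,\ldots,d^t-1\}$ are never written to they remain black forever, so white leaves always lie in $\{0,\ldots,n-1\}$. The historian of a light path is the leftmost leaf descendant of a gray top node $u$; because $u$ has at least one white leaf descendant (of index below $n$) and ``leftmost'' just means smallest index, the historian too has index below $n$.

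The real content is the constant-time initialization, which must install the (unique) Lemma~\ref{lem:tree} state encoding the coloring white-below-$n$, black-at-or-above. Under this coloring the gray nodes are exactly $(j,\lfloor n/d^j\rfloor)$ with $d^j\nmid n$; they form a single chain of ancestors of the boundary, and each non-root gray node is the unique (hence leftmost) gray child of its parent and therefore not a top node. Consequently there is a single initial light path $P$, starting at the root with historian leaf $0$, descending through the gray chain to the deepest gray node $(j^*+1,\lfloor n/d^{j^*+1}\rfloor)$ (where $j^*=\max\{j:d^j\mid n\}$), and continuing down the left spine of its leftmost white child to the proxy $d^{j^*+1}\lfloor n/d^{j^*+1}\rfloor$. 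The navigation vectors of $P$ are determined by the base-$d$ digits $n_i=\lfloor n/d^i\rfloor\bmod d$ of $n$: all-white at heights $j\le j^*$, then $n_{j^*}$ whites followed by $d-n_{j^*}$ blacks at height $j^*+1$, and $n_{j-1}$ whites, one gray, and $d-n_{j-1}-1$ blacks at each height $j>j^*+1$. Initialization stores this history in $A[0]$ and sets the root bits to ``gray'' (or to ``white'' in the degenerate case $n=d^t$, which is just Lemma~\ref{lem:tree}).

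The main obstacle is producing the $2dt$-bit history word in $O(1)$ time. It fits in one machine word because $2dt\le w$; $j^*$ is obtained from $n$ in constant time by feeding the lowest set bit of $n$ (Lemma~\ref{lem:word}) through a division by $\log_2 d$; and the base-$d$ digits of $n$ already sit at known bit positions of $n$ because $d$ is a power of~$2$. I would assemble the history by a short bit-parallel routine starting from a word of $t$ concatenated all-black navigation vectors (obtainable as a fixed multiple of the constant $1_{dt,2}=(2^{2dt}-1)/3$ introduced in the proof of Lemma~\ref{lem:tree}), then using the digits of $n$ to zero the white prefix inside each $2d$-bit block and to flip one black into a gray at the right position in every block at a height above $j^*+1$, and finally applying a single mask, derived from $j^*$, to zero out all blocks at heights $\le j^*$. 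Once that word and the root bits are in place, all invariants required by Lemma~\ref{lem:tree} hold, the access procedures are reused unchanged, the total space is $nw+2$ bits, and both operations run in $O(t)$ time.
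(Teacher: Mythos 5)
Your high-level plan coincides with the paper's: reuse the Lemma~\ref{lem:tree} construction for universe $d^t$, allocate only $n$ words of $A$, and pretend from the start that the $d^t-n$ rightmost leaves are black. Your safety argument (proxies are white and hence below $n$; a historian has a white leaf descendant and hence also lies below $n$) is sound, and your description of the resulting initial state --- one gray chain to height $j^*+1$, with navigation vectors read off from the base-$d$ digits of $n$ --- is correct.

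The gap is in your claim of \emph{constant-time} initialization. You commit to assembling the $2dt$-bit history word eagerly, in $O(1)$ time, and the crux you identify --- "using the digits of $n$ to zero the white prefix inside each $2d$-bit block" and "flip one black into a gray at the right position in every block" --- requires, in effect, computing $2^{2n_{j-1}}$ (a variable left shift by a different amount per block) simultaneously for all heights $j$. That is not a standard $O(1)$ word-RAM primitive. The usual parallel-comparison route (Fredman--Willard style) would budget roughly $\log_2 d$ bits per position, hence $\Theta(t\,d\log_2 d)$ bits in total, which exceeds the available $w$ already for $d\ge 8$ under the hypothesis $2dt\le w$; a round-by-round spread costs $\Theta(\log d)$ operations. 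You neither bound $d$ nor exhibit a trick that overcomes this, so the step is unsubstantiated and, as far as I can tell, is genuinely the hard part of your route.

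The paper sidesteps this entirely by being lazy. At initialization time it sets the root color to white and stores the \emph{integer} $n$ itself in $A[0]$; both take constant time. While the root remains white, every legal $\Tvn{read}$ correctly returns $0$, so no history is needed. The $O(t)$-time computation of the history (leaves $0,\ldots,n-1$ white, $n,\ldots,d^t-1$ black) is postponed to the start of the one and only $\Tvn{write}$ that finds the root white, where it is absorbed into that operation's $O(t)$ budget, after which the root is gray and your invariants hold. This one observation turns the obstacle on which your proposal gets stuck into a non-issue; the rest of your reasoning would then go through.
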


\begin{proof}
We use the construction of the previous proof
for universe size $d^t$, but provide for its
storage only a word array $A$ with index set
$\{0,\ldots,n-1\}$ in addition to two root bits.
If $n=d^t$, nothing more needs to be said.
If $n<d^t$, before executing any true
\Tvn{write} operation, we change the color
of the root from white to gray
(of course, by modifying the root bits)
and store in $A[0]$ a history that corresponds
to the leaves $0,\ldots,n-1$ being white and
$n,\ldots,d^t-1$ being black.
Provided that only legal accesses are subsequently attempted,
this prevents the data structure from ever
choosing a proxy larger than $n-1$, and it
will process the operations correctly
without ever attempting to access one of
the nonexisting array elements
$A[n],\ldots,A[d^t-1]$.

The computational steps just described are conceptually
part of the initialization of the data structure,
but the computation of the history
to be stored in $A[0]$ may take more than constant time.
In order to guarantee a constant initialization time,
we postpone the steps and execute them
as an initial part of the first and only execution of
a \Tvn{write} operation that begins with a white root,
until which point we remember $n$ in~$A[0]$.
Since the steps are easily carried out in
$O(t)$ time, the bound of $O(t)$ for the
execution time of \Tvn{write} remains valid.
\end{proof}

We now take the step to values of~$n$
larger than $d^t$.

\begin{lemma}
\label{lem:manytrees}
There is a clearable word array that,
for all given $n,t\in\mathbb{N}$,
can be initialized for universe size~$n$
in constant time and
subsequently occupies at most $n w+\lceil{n({t/{(2 w)}})^t}\rceil$
bits and, if given access to $n$ and $t$,
supports \Tvn{read} and \Tvn{write}
in $O(t)$ time.
\end{lemma}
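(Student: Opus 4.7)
The plan is to reduce the problem to the $n\le d^t$ case already handled by Lemma~\ref{lem:partialtree}. I would partition the universe $\{0,\ldots,n-1\}$ into roughly equal blocks and place a Lemma~\ref{lem:partialtree} structure on each, choosing an internal depth $\tilde t=\Theta(t)$ and $d$ the largest power of two with $2d\tilde t\le w$, so that the block size $B=d^{\tilde t}$ is as large as possible while each block's operations still cost $O(\tilde t)=O(t)$. The number of blocks is $K=\lceil n/B\rceil$, and the last block, being possibly short, is accommodated exactly by the partial-tree variant from Lemma~\ref{lem:partialtree}.

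The obvious problem with this naive blocking is that initializing $K$ separate block structures at startup would take $\Omega(K)$ time, violating the constant-time initialization requirement. To get around this, I would introduce a clearable meta-structure that stores for each block the ``has this block ever been written?'' flag together with the two root bits Lemma~\ref{lem:partialtree} would otherwise keep locally. The meta-structure contains only $O(K)$ bits, which I would pack into $\lceil O(K)/w\rceil$ $w$-bit words and in turn wrap in another invocation of Lemma~\ref{lem:partialtree}; since the metastructure is exponentially smaller than $n$, one extra level suffices in the regime where the redundancy target bites, and a harmless recursion covers the rest. At initialization, only the meta-structure is cleared (in $O(1)$ time by Lemma~\ref{lem:partialtree}); the first time a block is touched, its $O(Bw)$ bits remain uninitialized physical memory but are declared ``logically white'' via the meta-structure, and a $\Tvn{write}$ proceeds to set up the block properly. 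A call $\Tvn{read}(\ell)$ or $\Tvn{write}(\ell,x)$ computes $i=\lfloor\ell/B\rfloor$ and $j=\ell-iB$ in constant time, queries the meta-structure in $O(\tilde t)$ time, and dispatches the operation into block~$i$ in another $O(\tilde t)=O(t)$ time, first performing the block's own initialization if this is its first access.

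The redundancy is $O(K)$ from block-root information plus $O(K)+O(w)$ from the meta-structure, so the whole construction uses $nw+O(K)+O(w)$ bits. Since $B\approx(w/(2\tilde t))^{\tilde t}$, we have $K=\lceil n/B\rceil$ scaling essentially like $n(\tilde t/w)^{\tilde t}$, so tuning the constant in $\tilde t=\Theta(t)$ makes $K\le\lceil n(t/(2w))^t\rceil/c$ for any desired constant~$c$ whenever $w$ is at least a constant multiple of~$t$.

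The main obstacle I anticipate is matching the declared bound $\lceil n(t/(2w))^t\rceil$ \emph{exactly}, not just up to constants, in every parameter regime. The generic ``large $w$, large $n$'' regime is the straightforward blocking calculation above. The regime $n\le B$ collapses to a single Lemma~\ref{lem:partialtree} call, which gives $nw+2$; this exceeds the target by one bit precisely when $\lceil n(t/(2w))^t\rceil=1$, and I would shave that bit by observing that when $n<d^{\tilde t}$ strictly the root can never be all-black, so one root bit suffices in place of two. The regime where $w$ is small relative to $t$ makes the target $\lceil n(t/(2w))^t\rceil$ loose enough that essentially any construction (including Lemma~\ref{lem:partialtree} without any saving) already fits. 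Reconciling these three regimes under a single clean parameter choice, and getting the lazy-initialization handshake between the meta-structure and the per-block Lemma~\ref{lem:partialtree} instances exactly right, is where the proof's delicacy lies.
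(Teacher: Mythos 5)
Your overall decomposition --- block the universe into roughly $d^t$-element chunks, run Lemma~\ref{lem:partialtree} on each chunk, and move the per-chunk root bookkeeping into a small meta-structure so that initialization touches $O(1)$ memory --- coincides with the paper's, but your implementation of the meta-structure does not work. You pack the $\Theta(K)$ root bits of the $K$ blocks into $\Theta(K/w)$ words and propose to wrap them in another invocation of Lemma~\ref{lem:partialtree}. That lemma applies only to at most $d^t$ words, so when $K/w>d^t$ you must recurse, and you call this recursion ``harmless''. It is not: each level shrinks the word count only by a factor of roughly $d^t w$, so the recursion depth needed to reach $\le d^t$ words is about $\log n/\log(d^t w)=\Theta(\log n/(t\log(w/t)))$, and since $n$ can approach $2^w$ this is $\Theta(w/(t\log(w/t)))$, not $O(1)$. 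Every level charges $\Theta(t)$ to every $\Tvn{read}$ and $\Tvn{write}$, so the access time degrades to $\Theta(w/\log(w/t))$ --- not $O(1)$ for constant $t$, which is exactly the regime behind Theorem~\ref{thm:main}'s headline constant-time result. Your time analysis silently assumes the meta-structure answers in $O(\tilde t)$, which is precisely what the recursion fails to deliver. The paper instead implements the meta-structure with the \emph{folklore} clearable word array, whose access time is $O(1)$ regardless of size; its larger redundancy is affordable because it is applied to only $M=\lfloor 2N/w\rfloor$ words, giving redundancy at most $3Mw+N\le 7N$, and for $N>1$ the slack from the word-enlargement constant yields $8N\le n(t/(4w))^t\le\lceil n(t/(2w))^t\rceil$. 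Replacing your recursion by that single constant-depth folklore layer is the missing ingredient.

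A smaller slip: your bit-shaving argument for the $N=1$ case, ``when $n<d^{\tilde t}$ strictly the root can never be all-black,'' is false. In Lemma~\ref{lem:partialtree} the phantom leaves $n,\ldots,d^t-1$ are treated as black, so once every real leaf $0,\ldots,n-1$ has been written the root is entirely black. The paper saves the bit differently: it keeps one bit distinguishing black from nonblack and signals a white root by storing in $A[0]$ a pattern that cannot arise as a gray root's history.
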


\begin{proof}
When $c\in\mathbb{N}$ is an arbitrary constant,
we can assume without loss of generality
that $n$ is a multiple of $c$.
This is because we can initialize up to
$c-1$ ``left-over'' words in constant time.
Moreover, a word RAM with a word length of $w$
bits can simulate one with a word length of
$c w$ bits with constant slowdown, i.e.,
every instruction can be simulated in constant time.
By these observations, we can essentially
pretend to be working on a word RAM with
a word length of $c w$ bits
(of course, the values communicated to and from
a user of the data structure are still $w$-bit
quantities).
In particular, we view $A$ as consisting
of $n/c$ \emph{large words} of $c w$ bits each,
and the condition $2 d t\le w$
of Lemma~\ref{lem:partialtree}
can be relaxed to $2 d t\le c w$.
We use this with $c=16$, for which choice the
condition becomes $d\le{{8 w}/t}$.

Assume that $t\le w$.
This entails no loss of generality because reducing
larger values of $t$ to $w$ does not increase
the space bound of the lemma
(recall that $w\ge\lceil{\log_2 n}\rceil$).
Compute $d$ as the largest power of~$2$
no larger than ${{8 w}/t}$ and
note that $d\ge{{4 w}/t}\ge 2$.
Dividing the universe $\{0,\ldots,{n/c}-1\}$
into \emph{ranges} of $d^t$ consecutive
elements each, except that the last range
may be smaller,
we store each subsequence of the client sequence
corresponding to a range in an instance of
the data structure of
Lemma~\ref{lem:partialtree}, called a \emph{tree},
except that the root bits are handled slightly differently.
Altogether we have
$N=\lceil{{n/{(c d^t)}}}\rceil\le\lceil{n({t/{(2 w)}})^t}\rceil$
trees.

If $N=1$, i.e., if there is only a single tree,
we use a single root bit to distinguish
between black and nonblack (i.e., white or gray).
In order to indicate a white root, in addition
to initializing the root bit to
the value that
denotes a nonblack color,
we store in $A[0]$
a value that cannot be the history of a
gray root, such as one in which all colors
in the navigation vector of the root are white.
The total redundancy is
$1=N\le\lceil{n({t/{(2 w)}})^t}\rceil$.

If $N>1$, we solve the problem of initializing
the $N$ trees differently.
Each tree has two root bits, and we must set these
to indicate a white root.
Assume, for convenience, that the root color white
is represented through two bits with a value of zero.
Then the task is to clear the $2 N$ root bits, i.e.,
to set them to zero.
Pack the $2 N$ root bits tightly in
$M=\lfloor{{{2 N}/w}}\rfloor$ fully occupied words
and at most one partially occupied word.
If there is an only partially occupied word,
clear it explicitly.
As for the $M$ fully occupied words,
maintain these, if $M\ge 1$, in a clearable word array
implemented with the folklore method discussed
near the end of Section~\ref{sec:related}.
In addition to the $M$ words, this needs space
for two tables with altogether $2 M$ entries and
one counter that takes values in $\{0,\ldots,M\}$.
Each table entry fits in a $w$-bit word, and
except in the trivial case
$w=1$, the counter can be stored in $N$ bits, so the
redundancy is at most $3 M w+N\le 7 N$.
Since $N>1$, we even have
$8 N\le 16 ({n/c})({t/{(4 w)}})^t
=n({t/{(4 w)}})^t
\le \lceil{n({t/{(2 w)}})^t}\rceil$.
This slightly stronger bound
is irrelevant here, but
useful
in the proof of Theorem~\ref{thm:main}.
\end{proof}

In order to derive 
Theorem~\ref{thm:main} from
Lemma~\ref{lem:manytrees} and
its proof, we must show how
to ``hide'' the parameters $n$ and $t$ in the
data structure essentially without additional space
or how to make do without them.
To achieve this, we pay close
attention to the layout of data within the
data structure.

\def\proofname{Proof of Theorem~\ref{thm:main}}
\begin{proof}
The memory allocated to the data structure
begins with a \emph{discriminator bit}
that selects between different representations.
If the discriminator bit is~0, it is simply followed by
a word array $A$ with index set $\{0,\ldots,n-1\}$
such that $A[\ell]=x_\ell$ for $\ell=0,\ldots,n-1$,
where $(x_0,\ldots,x_{n-1})$ is the
client sequence.
In other words, except for the discriminator
bit, the clearable word array is represented
as a usual array.
We call this representation
the \emph{all-black representation}.
The redundancy of the all-black representation
is~1, and it trivially supports \Tvn{read}
and \Tvn{write} in constant time
(but range violations cannot be detected).
Informally, the all-black representation is the
``terminal'' representation that can be used
when every array element has been written
to at least once.

If the discriminator bit is~1, we use slightly
modified variants of the data structure of
Lemma~\ref{lem:manytrees}.
Recall that the data structure operates with
large words of $c w$ bits each,
where $c=16$.
The first modification is to replace $c$
by $c'=c+6$, i.e., to redefine a large word
to consist of $c' w$ bits rather than $c w$ bits.
The parameter $d$ is computed
exactly as before, whereas $N$ is now
$\Tceil{{n/{(c' d^t)}}}$.
The modification does not affect the proof of
Lemma~\ref{lem:manytrees}, but it means that
a large word used to hold a history now has
$6 w$ bits of \emph{free space}
in which other information can be stored.
We follow the discriminator bit by
two words (i.e., groups of $w$ consecutive bits)
that hold $n$ and $t$.
Given these parameters, the quantities $d$ and
$N$ of the proof of
Lemma~\ref{lem:manytrees} can be computed
in constant time, so that they need not be stored
($N$ is computed according to the modified
formula above that involves $c'$ rather than~$c$).
Depending on the size of $N$, one of two
representations is chosen.

If $N<2 w$, we use the
\emph{few-roots representation}, in which
the words that hold $n$ and $t$ are followed
by $2 N$ \emph{root bits}, two for each of
the $N$ trees of
the modified data structure of
Lemma~\ref{lem:manytrees}.
Because the root bits are so few, they
can be cleared explicitly in constant time,
so there is no need to appeal to the
folklore method.
The final data component
of the few-roots representation
consists of $N$ \emph{segments}, each of
which corresponds to one of the $N$ trees.
If the universe size of a tree is $m$
(except in the case of the last tree,
$m=d^t$),
the corresponding
segment contains $m$ large words.

It is important to note that the first of
the $N$ segments begins immediately after
the discriminator bit.
Thus the two words that hold $n$ and $t$
as well as the at most four words that
hold the $2 N\le 4 w$ root bits are already
part of the first large word.
As observed above,
because of its six words of free space
that large word can still hold a history.
Of course, when the history changes, the
data stored in the free space
of the large word should not be touched.
The first large word is the only one
whose free space is actually used.

For brevity, say that the color of a tree
is the color of its root.
As long as a tree is not black,
its leftmost leaf is a historian,
so that the corresponding large word
indeed contains a history.
When the tree turns black, however,
all of its large words are
needed to hold elements of the client sequence.
This necessitates a final twist to the
few-roots representation:
It keeps two large words interchanged,
namely the first large word of the
first tree and the first large word of the
first nonblack tree.
In order to maintain this invariant, we
must be able to locate the correct
``replacement tree'' when what used
to be the first nonblack tree turns black.
Applying an algorithm of Lemma~\ref{lem:word}
to the sequence of root bits, this can be
done in constant time, as can the
``replacement'' itself, which is
a cyclic shift of two or three large words.
Of course, when the last tree turns black,
there is no ``replacement tree'', but then
we can return the two large words that
were kept interchanged to their original positions,
thereby allowing the information in the
free space to be overwritten, and be left
with the all-black representation.
Since $n$ is a multiple of $c'$ by assumption,
the redundancy
of the few-roots representation is~1.
The initialization time is constant, and
\Tvn{read} and \Tvn{write} are supported in
$O(t)$ time.

The final case to consider is when $N\ge 2 w$.
This is easy.
Recall from the end of the proof of
Lemma~\ref{lem:manytrees} that for $N>1$,
the data structure of that lemma is
smaller by at least $N$ bits than what is allowed
by the space bound of the lemma and of
Theorem~\ref{thm:main}.
Since the two words that hold $n$ and $t$
together occupy at most $N$ bits, we can simply follow
these words by a complete instance of
the data structure of
Lemma~\ref{lem:manytrees} and carry out
all operations in the latter,
of course obtaining the values of $n$ and $t$
from the two preceding words.
As was just argued, the redundancy of this
representation is bounded by
$\Tceil{n({t/{(2 w)}})^t}$,
the initialization time is constant,
and \Tvn{read} and \Tvn{write} are
supported in $O(t)$ time.
\end{proof}

It is interesting to note that we can add an additional
operation to our clearable word array, namely an
iteration that enumerates all first arguments
of past $\Tvn{write}$ operations
(informally, the positions to which writing
took place).
For this we would iterate over the
codes handed out by the folklore method
and the associated trees,
which is easy,
enumerate all leaves of each tree whose root
is black, and for each tree whose root is gray
carry out a depth-first search
(say) of its gray nodes and enumerate all leaf
descendants of their black children.
The time needed is proportional to the number $k$
of leaves enumerated plus the total number of
gray nodes, a quantity that is clearly bounded
by $(t+1)k$ and never larger than $2 n$.
The iteration must be called with an argument
that indicates~$n$.

\bibliography{clean}

\begin{thebibliography}{10}

\bibitem{AhoHU74}
Alfred~V. Aho, John~E. Hopcroft, and Jeffrey~D. Ullman.
\newblock {\em The Design and Analysis of Computer Algorithms}.
\newblock Addison-Wesley, 1974.

\bibitem{Ale10}
Andrei Alexandrescu.
\newblock {\em The {D} Programming Language}.
\newblock Addison-Wesley, 2010.

\bibitem{AngV79}
D.~Angluin and L.~G. Valiant.
\newblock Fast probabilistic algorithms for {H}amiltonian circuits and
  matchings.
\newblock {\em J. Comput. Syst. Sci.}, 18(2):155--193, 1979.
\newblock \href {http://dx.doi.org/10.1016/0022-0000(79)90045-X}
  {\path{doi:10.1016/0022-0000(79)90045-X}}.

\bibitem{FiaMNS91}
Amos Fiat, J.~Ian Munro, Moni Naor, Alejandro~A. Sch{\"{a}}ffer, Jeanette~P.
  Schmidt, and Alan Siegel.
\newblock An implicit data structure for searching a multikey table in
  logarithmic time.
\newblock {\em J. Comput. Syst. Sci.}, 43(3):406--424, 1991.
\newblock \href {http://dx.doi.org/10.1016/0022-0000(91)90022-W}
  {\path{doi:10.1016/0022-0000(91)90022-W}}.

\bibitem{FraG08}
Gianni Franceschini and Roberto Grossi.
\newblock No sorting? {B}etter searching!
\newblock {\em {ACM} Trans. Algorithms}, 4(1):2:1--2:13, 2008.
\newblock \href {http://dx.doi.org/10.1145/1328911.1328913}
  {\path{doi:10.1145/1328911.1328913}}.

\bibitem{FreW93}
Michael~L. Fredman and Dan~E. Willard.
\newblock Surpassing the information theoretic bound with fusion trees.
\newblock {\em J. Comput. Syst. Sci.}, 47(3):424--436, 1993.
\newblock \href {http://dx.doi.org/10.1016/0022-0000(93)90040-4}
  {\path{doi:10.1016/0022-0000(93)90040-4}}.

\bibitem{FreK16}
Kimmo Fredriksson and Pekka Kilpel{\"{a}}inen.
\newblock Practically efficient array initialization.
\newblock {\em J. Softw. Pract. Exper.}, 46(4):435--467, 2016.
\newblock \href {http://dx.doi.org/10.1002/spe.2314}
  {\path{doi:10.1002/spe.2314}}.

\bibitem{GosJSBB15}
James Gosling, Bill Joy, Guy Steele, Gilad Bracha, and Alex Buckley.
\newblock {\em The Java Language Specification, Java SE 8 Edition}.
\newblock Oracle America, 2015.

\bibitem{Hag98}
Torben Hagerup.
\newblock Sorting and searching on the word {RAM}.
\newblock In {\em Proc. 15th Annual Symposium on Theoretical Aspects of
  Computer Science ({STACS} 1998)}, volume 1373 of {\em LNCS}, pages 366--398.
  Springer, 1998.
\newblock \href {http://dx.doi.org/10.1007/BFb0028575}
  {\path{doi:10.1007/BFb0028575}}.

\bibitem{HagK16}
Torben Hagerup and Frank Kammer.
\newblock Succinct choice dictionaries.
\newblock {\em Computing Research Repository ({CoRR})}, arXiv:1604.06058
  [cs.DS], 2016.
\newblock \href {http://arxiv.org/abs/1604.06058} {\path{arXiv:1604.06058}}.

\bibitem{VHDL11}
{IEC/IEEE} {I}nternational {S}tandard; {B}ehavioural languages --- {P}art 1--1:
  {VHDL} {L}anguage {R}eference {M}anual.
\newblock IEC 61691--1--1:2011(E) {IEEE} Std 1076-2008, 2011.
\newblock \href {http://dx.doi.org/10.1109/IEEESTD.2011.5967868}
  {\path{doi:10.1109/IEEESTD.2011.5967868}}.

\bibitem{KatG17}
Takashi Katoh and Keisuke Goto.
\newblock In-place initializable arrays.
\newblock {\em Computing Research Repository ({CoRR})}, arXiv:1709.08900
  [cs.DS], 2017.
\newblock \href {http://arxiv.org/abs/1709.08900} {\path{arXiv:1709.08900}}.

\bibitem{LooNY17}
Jacob Teo~Por Loong, Jelani Nelson, and Huacheng Yu.
\newblock Fillable arrays with constant time operations and a single bit of
  redundancy.
\newblock {\em Computing Research Repository ({CoRR})}, arXiv:1709.09574
  [cs.DS], 2017.
\newblock \href {http://arxiv.org/abs/1709.09574} {\path{arXiv:1709.09574}}.

\bibitem{Mun86}
J.~Ian Munro.
\newblock An implicit data structure supporting insertion, deletion, and search
  in ${O}(\log^2 n)$ time.
\newblock {\em J. Comput. Syst. Sci.}, 33(1):66--74, 1986.
\newblock \href {http://dx.doi.org/10.1016/0022-0000(86)90043-7}
  {\path{doi:10.1016/0022-0000(86)90043-7}}.

\bibitem{Nav14}
Gonzalo Navarro.
\newblock Spaces, trees, and colors: The algorithmic landscape of document
  retrieval on sequences.
\newblock {\em {ACM} Comput. Surv.}, 46(4):52:1--52:47, 2014.
\newblock \href {http://dx.doi.org/10.1145/2535933}
  {\path{doi:10.1145/2535933}}.

\end{thebibliography}
\end{document}